\newcommand{\ubar}[1]{\underaccent{\bar}{#1}}
\theoremstyle{plain}
\newtheorem{theorem}{Theorem}
\theoremstyle{plain}
\newtheorem{proposition}{Proposition}
\theoremstyle{plain}
\theoremstyle{plain}
\theoremstyle{definition}
\newtheorem{assumption}{Assumption}
\theoremstyle{definition}
\newtheorem{definition}{Definition}
\theoremstyle{remark}
\newtheorem{remark}{Remark}
\Crefname{equation}{Equation}{Eqs.}
\title{\LARGE\bf Adaptive Robust Control Contraction Metrics: \\ Transient Bounds in Adaptive Control with Unmatched Uncertainties}
\author{Samuel G. Gessow and Brett T. Lopez%
\thanks{Both authors are with the VECTR Laboratory, University of California, Los Angeles, Los Angeles CA, {\tt\footnotesize \{sgessow,btlopez\}@ucla.edu}}
}
\begin{document}
\maketitle
\thispagestyle{empty}
\pagestyle{empty}


\begin{abstract}
This work presents a new sufficient condition for synthesizing nonlinear controllers that yield bounded closed-loop tracking error transients despite the presence of unmatched uncertainties that are concurrently being learned online.
The approach utilizes contraction theory and addresses fundamental limitations of existing approaches by allowing the contraction metric to depend on the unknown model parameters.
This allows the controller to incorporate new model estimates generated online without sacrificing its strong convergence and bounded transients guarantees. 
The approach is specifically designed for trajectory tracking so the approach is more broadly applicable to  adaptive model predictive control as well.
Simulation results on a nonlinear system with unmatched uncertainties demonstrates the approach.

\end{abstract}

\section{Introduction}
\label{sec:introduction}

There is a growing need for nonlinear controllers that guarantee the closed-loop tracking error has bounded transients even when parametric model uncertainty is present.
For example, trajectory planning or high-level decision-making can be improved by leveraging transient bounds to predict how the closed-loop tracking error will evolve in time, leading to less myopic behaviors.
Furthermore, in order to limit conservatism, these controllers must be able to incorporate online model estimates without losing their strong performance guarantees.
However, concurrent control and model estimation, i.e., indirect adaptive control, that preserves transient bounds is nontrivial as the controller must simultaneously account for model estimation error and the transients associated with model estimation.
Traditionally, trajectory tracking with model error is addressed by establishing the input-to-state convergent (ISC) \cite{pavlov2006uniform} or (stronger) incremental input-to-state stable ($\delta$ISS) \cite{angeli2002lyapunov} property.
If the model is also being estimated online then the controller must be robust to not only the uncertainty in the model but also the transients of the parameter estimates \cite{krstic1995nonlinear,zamani2013backstepping}.
These conditions are pragmatically difficult to establish though for general uncertain nonlinear systems.
We will show that being ISC with respect to estimation transients can be bypassed and instead replaced with a simpler condition that eases the design of robust high-performance controllers without scarifying tracking error transient bounds.

A sufficient condition for bounded tracking error transients is the existence of an input-to-state convergent control Lyapunov function (ISC-clf) where the closed-loop system is ISC with respect to the unknown model parameters.
For the case where the model uncertainties are unmatched (outside the span of the control input matrix), the ISC-clf must be model-dependent, i.e., a function of the unknown parameters.
In essence one must find a \emph{family} of ISC-clf's. 
An immediate consequence of this dependency is that the ISC-clf itself must change as the model is updated.
This introduces significant complexity in the controller design since the transient bounds will only hold if the controller either anticipates how the model is changing or is ISC with respect to the estimation transients.
Since predicting the model estimation transients is unrealistic, imposing ISC with respect to the estimation transients is the primary approach used in the literature.
Aside from systems in strict-feedback form, systematically computing an ISC-clf for general uncertain nonlinear systems is difficult, and typically requires assumptions about the form of the ISC-clf, i.e., restricting it to take a quadratic form.
Conversely, contraction and contraction metrics \cite{lohmiller1998contraction,manchester2017control,singh2017robust,manchester2018robust} is a promising new framework to systematically compute feedback policies for stabilizable nonlinear systems.
A key advantage of contraction is that the metric search can be formulated as a convex optimization and is therefore applicable to a much broader class of systems. 
While robust \cite{singh2017robust,manchester2018robust,zhao2022tube} and direct adaptive \cite{lopez2020adaptive,lopez2021universal} control techniques based on contraction metrics have been proposed, an adaptive robust control strategy which guarantees 1) bounded tracking error transients and 2) convergence to the desired trajectory as the model estimate is improved has yet to be developed.

The main contribution of this work is a new contraction-based adaptive robust control framework that possesses desirable convergence and bounded transients guarantees for systems with unmatched uncertainties being estimated online.
Specifically, we introduce the \emph{adaptive robust control contraction metric} which, when used in feedback, can incorporate new estimates of the uncertain model parameters to improve closed-loop performance without sacrificing its strong guarantees.
Central to the approach is allowing the metric to depend on the estimated model parameters.
While this generally presents unique challenges in terms of ensuring the tracking error transients remain bounded as the model is changing online, we present a new sufficient condition for the metric that preserves transient bounds and can be immediately added to existing algorithms that numerically construct contraction metrics.
Trajectory tracking results for a system not feedback linearizable nor in strict-feedback form with unmatched uncertainties confirm the theoretical properties and showcases the advantage of the approach when compared to existing robust techniques.

\textit{Notation:} Symmetric positive definite $n \times n$ matrices are denoted $\mathcal{S}_{+}^n$. Positive scalars are denoted  $\mathbb{R}_+$ and strictly positive scalars are $\mathbb{R}_{>0}$. 
The infinity norm of a signal $x(t)$ is $||x(t)||_\infty$ and the $L_1$ norm at time $t$ is $|x(t)|$. 
The shorthand notation of a function $T$ parameterized by a vector $a$ with vector argument $z$ is $T_a(z) \triangleq T(z;a)$.
The directional derivative of a smooth matrix $M : \mathbb{R}^n \rightarrow \mathcal{S}^n_+$ along a vector field $v : \mathbb{R}^n \rightarrow \mathbb{R}^n$ is $\partial^\circ_v M(x) = \sum_i^n \nabla_{x_i} M(x) \, \dot{x}_i$ or $\partial_v M(x) = \sum_i^n \nabla_{v_i} M(x) \, \dot{v}_i$.

\section{Problem Formulation and Preliminaries}
\label{sec:problem}
Consider the uncertain control affine nonlinear system
\begin{equation}
\label{eq:system}
    \dot x = f(x) + \Delta(x)^\top\theta + B(x)u,
\end{equation}
with state $x \in \mathbb{R}^n$, control input $u \in \mathbb{R}^m$, nominal dynamics $f: \mathbb{R}^n \rightarrow \mathbb{R}^n$, and control input matrix $B: \mathbb{R}^n \rightarrow \mathbb{R}^{n \times m}$ with columns $b_i(x)$ for $i=1,\dots,m$. 
The model parameters $\theta \in \mathbb{R}^p$ are unknown but with known dynamics $\Delta: \mathbb{R}^n \rightarrow \mathbb{R}^{p \times n}$ with rows $\Delta_i(x)$ for $i=1,\dots,p$. 
No restrictions are placed on the model uncertainties for generality, hence they are allowed to be unmatched, i.e., $\Delta(x)^\top \theta \notin \mathrm{span}\{B(x)\}$.
Shorthand notation for \cref{eq:system} will be $\dot{x} = F_{\theta}(x,u)$.
We make the following assumptions about the uncertain parameters $\theta$ and their estimates $\hat{\theta}$.
\begin{assumption}
    \label{assumption:theta}
    The unknown parameters $\theta$ belong to a known compact set $\Theta$.
    Furthermore, the parameter estimation error $\tilde{\theta} \triangleq \hat{\theta} - \theta$ also belongs to a known compact set $\tilde{\Theta}$.
\end{assumption}
\begin{assumption}
    \label{assumption:error}
    The model parameter estimate $\hat{\theta}$ is being generated online via a suitable model estimator where the estimation error $\tilde{\theta}$ is monotonically non-increasing.
\end{assumption}
Assumptions~\ref{assumption:theta} and~\ref{assumption:error} are standard in the indirect adaptive control and adaptive MPC literature \cite{khalil1996adaptive, kohler2021robust}.
Note that Assumption~\ref{assumption:error} is more restrictive but can be generally achieved using least squares over a sliding window \cite{chowdhary2013concurrent} or set membership identification \cite{kosut1992set}
.

The goal of this work is to develop constructive conditions for a model-parameterized feedback policy $\kappa_{\theta}:\mathbb{R}^n \times \mathbb{R}^n \times \mathbb{R}^p \rightarrow \mathbb{R}^m$ such that the state $x(t)$ converges exponentially to a region near a desired trajectory $x_d(t)$ when the system dynamics are uncertain and being estimated online.
Note that $\kappa_{\theta}$ must be model-parameterized so that the tracking error converges to zero as the parameter estimate converges to its true value.
Since this work is concerned with converging to a time-varying trajectory despite the presence of model error, it is beneficial to introduce the notion of an input-to-state convergent system.
\begin{definition}[\cite{pavlov2006uniform}]
\label{def:main}
    A closed-loop system is \emph{input-to-state convergent} (ISC) with respect to the input $w(t)$ if, for any feasible trajectory $x_d(t)$, the state $x(t)$ satisfies
    \begin{equation*}
        |x(t)-x_d(t)| \leq \beta(|x(t_0)-x_d(t_0)|,t) + \sigma\Biggl( \sup_{\tau \in [t_0,t]} |w(\tau)|\Biggr),
    \end{equation*}
     where $\beta: \mathbb{R} \times \mathbb{R} \rightarrow \mathbb{R}$ is a class $\mathcal{KL}$ function and $\sigma: \mathbb{R} \rightarrow \mathbb{R}$ is a class $\mathcal{K}$ function.    
\end{definition}
\cref{def:main} ensures that the tracking error transients is bounded, and that $x(t) \rightarrow x_d(t)$ if $w(t) \rightarrow 0$.
More broadly, ISC \cite{pavlov2006uniform} is a natural extension to ISS \cite{sontag1995characterizations} and is deeply related to $\delta$ISS \cite{angeli2002lyapunov}.
In particular, ISC and $\delta$ISS are equivalent if the system evolves on a compact set \cite{ruffer2013convergent}.
We will also make use of a specialized version of \cref{def:main}, namely the notion of an exponentially convergent system, where the $\mathcal{KL}$ function $\beta$ takes an exponential form.
\begin{definition}
\label{def:main_exp}
    A closed-loop system is \emph{exponentially input-to-state convergent} (eISC) with respect to the input $w(t)$ if, for any feasible trajectory $x_d(t)$, state $x(t)$ satisfies
    \begin{equation*}
        \begin{aligned}
            |x(t)-x_d(t)|  \leq \  & \sigma_1(|x(t_0)-x_d(t_0)|) e^{-\lambda {(t-t_0)}} \\ 
             &+ \sigma_2\Biggl( \sup_{\tau \in [t_0,t]} |w(\tau)|\Biggr),
        \end{aligned}
    \end{equation*}
     where $\lambda \in \mathbb{R}_{>0}$ is the convergence rate and $\sigma_{1,2}: \mathbb{R} \rightarrow \mathbb{R}$ are class $\mathcal{K}$ functions.    
\end{definition}

It is important to note that eISC and exponential $\delta$ISS are not necessarily equivalent as exponential convergence to a single trajectory is a weaker condition than exponential convergence of two arbitrary trajectories. 
With that said, it is easy to show that exponential $\delta$ISS implies eISC -- a fact we will use later in this work.

\begin{remark}
Any system that satisfies \cref{def:main} or \cref{def:main_exp} also satisfies the Universal $L_\infty$ gain definition as presented in \cite{zhao2022tube}.
The key difference between the definition presented here and that in \cite{zhao2022tube} is that \cref{def:main,def:main_exp} use a generic class $\mathcal{K}$ function $\sigma$ and ensure that $\beta$ is a $\mathcal{KL}$ function by definition or by making it a decaying exponential.
\end{remark}

\cref{def:main,def:main_exp} are important characterizations of closed-loop systems but do not contain constructive conditions for a control law or control Lyapunov function (clf) that satisfy their respective mathematical definitions.
While $w(t)$ is a general input, within the context of indirect adaptive control one aims to achieve ISC with respect to model estimation error while the model estimate is being updated online.
A necessary and sufficient condition for a system to be eISC with respect to $\tilde{\theta}$ and $\tfrac{d}{dt}{\hat{\theta}}$ is the existence of an exponentially ISC-control Lyapunov function (eISC-clf)\footnote{An eISC-clf is similar to an ISS-clf discussed in \cite{krstic1995nonlinear} but instead ensures exponential convergence to a time-varying trajectory $x_d(t)$. Proving the equivalence between eISC and the existence of an eISC-clf follows similarly to that in \cite{krstic1995nonlinear} and is omitted for brevity}.

\begin{definition}
    \label{def:isc-clf}
    A continuously differentiable function $V_{\theta} : \mathbb{R}^n \times \mathbb{R}^n \times \mathbb{R}^p \rightarrow \mathbb{R}_+$ where $V_{\theta}(x,x_d) = 0 \iff x(t) = x_d(t)$ is an \emph{exponentially input-to-state convergent control Lyapunov function} (eISC-clf) if the following hold
    \begin{gather}
        k_1 |x-x_d|^a \leq V_{\theta}(x,x_d) \leq k_2 |x-x_d|^a \label{eq:isc-clf-pd} \\
        |x(t) - x_d(t)| \geq \sigma\left( \sup_{\tau \in [t_0,t]} \left| \left[ \begin{array}{c} {\tilde{\theta}(\tau)} \\ \dot{\hat{\theta}}(\tau) \end{array} \right] \right| \right) \implies  \nonumber \\
        \underset{u \in \mathbb{R}^m}{\mathrm{inf}} \Big\{ \nabla_x V_{\hat{\theta}}^\top F_{\hat{\theta}}(x,u) + \nabla_{x_d} V_{\hat{\theta}}^\top F_{\hat{\theta}}(x_d,u_d) \nonumber \\
        - \nabla_x V_{\hat{\theta}}^\top \Delta^\top \tilde{\theta} + \nabla_{\hat{\theta}} V_{\hat{\theta}}^\top \dot{\hat{\theta}} \Big\} \leq -k_3 |x-x_d|^a \label{eq:isc-clf}
    \end{gather}
    where $k_1,\,k_2,\,k_3,\,a\in \mathbb{R}_{>0}$, $\sigma:\mathbb{R}\rightarrow\mathbb{R}$ is a class $\mathcal{K}$ function, and $\hat{\theta} \in \Theta$ and $\tfrac{d}{dt}{\hat{\theta}} \in \Omega$ belong to compact sets.
\end{definition}

Synthesizing an eISC-clf that is robust to model estimation error and estimation transients is not trivial due to the presence of $\nabla_{\hat{\theta}} V(x)^\top \, \tfrac{d}{dt}{\hat{\theta}}$ in \cref{eq:isc-clf}.
Techniques have been developed for systems in strict-feedback form \cite{krstic1995nonlinear,zamani2011backstepping} but a systematic constructive procedure for general nonlinear systems has yet to be proposed.
A promising approach, and the one taken here, is the use of contraction analysis \cite{lohmiller1998contraction} and the control contraction metric (ccm) \cite{manchester2017control}.
In \cite{manchester2018robust}, the robust ccm was introduced to ensure eISC despite the presence of model uncertainties or external disturbances; the robust ccm was later used for robust motion planning in \cite{zhao2022tube}.
The main limitation of using the robust ccm framework is that the tracking error is not guaranteed to converge to zero even if the model is perfectly known after online estimation.
This is a direct result of the metric (and hence controller) being designed for a \emph{single} model realization.
In essence, picking a single model for metric / controller design eliminates the ability to incorporate new model information gathered online without violating tracking error or transient bounds.
The method proposed in this work addresses this limitation while guaranteeing that the tracking error transients is bounded \emph{and} that the tracking error will converge to zero if the model estimation error converges to zero.

\section{Main Result}
\label{sec:proofs}
\subsection{Overview}
This section contains the main results of this paper. 
First, a brief review of control contraction metrics is presented.
Then, the adaptive robust ccm is defined.
Lastly, we show how an adaptive robust ccm can be constructed via convex optimization to achieve the bounds in \cref{def:main_exp}.

\subsection{Adaptive Robust Control Contraction Metrics}

Contraction analysis \cite{lohmiller1998contraction} is an attractive framework for synthesizing stabilizing control laws.
In particular, the search for a ccm $M:\mathbb{R}^n\rightarrow 
\mathcal{S}_+^n$ can be formulated as a convex condition amenable to numerical methods.
Furthermore, once a metric $M(x)$ is found then the corresponding control law ensures any two trajectories converge to each other exponentially. 
This convergence can be stated in terms of geodesics and Riemannian energy. 
For a smooth manifold $\mathcal{M}$ a geodesic $\gamma: [0,1] \times \mathbb{R} \rightarrow \mathbb{R}^n$ is the shortest path between any two points.
If we impose the boundary conditions $\gamma(0,t)=x_d(t)$, and $\gamma(1,t)=x(t)$ then the Riemannian energy $E(\gamma(s,t))=\int_0^1 \gamma_s(s,t)^\top M(\gamma(s,t))\gamma_s(s,t)\,ds$  where $\gamma_s(s,t)\triangleq \nabla_s \gamma(s,t)$ can be interpreted as a form of tracking error since $E(\gamma(s,t)) = 0 \iff x(t) = x_d(t)$.
Since $M(x)$ is a ccm then $\dot{E}(\gamma(s,t)) \leq - 2 \lambda E(\gamma(s,t))$ yielding $x(t) \rightarrow x_d(t)$ exponentially with rate $\lambda$.

To further motivate this work, and the need for a new type of ccm, consider the differential dynamics of \cref{eq:system} $\dot \delta_x= A_{\theta}(x,u) \delta_x + \Delta(x)^\top \delta_\theta + B(x) \delta_u,$
where $ A_{\theta}(x,u)= \nabla_x f(x) + \sum_{i=1}^{m}{\nabla_x b_i(x) \, u_i}+\sum_{i=1}^p{\nabla_x\Delta_i(x)\theta_i}$
is the Jacobian of the dynamics.
When the model is known, a metric $M(x)$ is constructed so that the implication $\delta_x^\top M B = 0 \implies \delta_x^\top (A^\top M + M A + \dot{M})\delta_x \leq -2 \lambda \delta_x^\top M \delta_x$ is true; the dependency on $u$ in the Jacobian can be eliminated by imposing $B(x)$ be a Killing vector for $M(x)$.
However, the presence of the uncertainty in the differential dynamics and Jacobian requires the metric satisfy a stronger condition to achieve bounded transients.
The robust ccm was proposed to meet this need but is unable to incorporate online model estimates thereby limiting its performance.
Hence, a new metric is needed in order to incorporate online model estimates without sacrificing tracking error transient bounds.
\begin{definition}
    A uniformly bounded Riemannian metric $M_\theta: \mathbb{R}^n \times \mathbb{R}^p \rightarrow \mathcal{S}^n_+$ is an \emph{adaptive robust control contraction metric} (arccm) if for each $\theta \in \Theta$ the dual metric $W_\theta(x)\triangleq M_\theta(x)^{-1}$ satisfies 
    \begin{align}
    & -\partial_{{x}} W_\theta + A_\theta W_\theta + W_\theta A_\theta^\top + BY_\theta + Y_\theta^\top B^\top  \nonumber \\
    & \hphantom{-} \preceq - 2 \lambda W_\theta + \frac{1}{\alpha^2}\Delta^\top \Delta,  \tag{C1} \label{eq:arccm_c1} \\[5pt]
    & \partial^\circ_{b_i} W_\theta - W_\theta \nabla_x b_i^\top -  \nabla_x b_i W_\theta = 0,~~ i=1,\dots,m \tag{C2} \label{eq:arccm_c2} \\[5pt]
    &-\mu W_\theta \preceq \nabla_{\theta_i} W_\theta \preceq \mu W_\theta ,~~ i=1,...,p  \tag{C3} \label{eq:arccm_c3}
    \end{align}
    where $Y_\theta:\mathbb{R}^n \times \mathbb{R}^p \rightarrow \mathbb{R}^m \times \mathbb{R}^n $ and $\lambda,\,\mu,\,\alpha \in \mathbb{R}_{>0}$.
    \label{def:arccm}
\end{definition}

Several remarks are in order. 
Firstly, \cref{eq:arccm_c1,eq:arccm_c2} are similar to those in \cite{manchester2018robust} but the metric is now allowed to be parameter-dependent. 
In the absence of parameter adaptation \cref{eq:arccm_c1,eq:arccm_c2} are sufficient conditions for the closed-loop system to be ISC with respect to $\tilde{\theta}$.
Note \cref{eq:arccm_c2} is only needed if the control input matrix depends on $x$ thereby ensuring $B(x)$ is a Killing vector field for $M(x)$.
Secondly, since the metric is parameter-dependent, a valid metric is immediately available for any $\theta \in \Theta$.
Hence, $M_{\theta}(x)$ represents a \emph{family of metrics} that can immediately incorporate online estimates of $\theta$, thereby addressing the fundamental limitations of the robust ccm.
Since the metric will change as $\theta$ is continuously being replaced with $\hat{\theta}$ from the estimator, \cref{eq:arccm_c1} would usually include another term related to the transients of $\hat{\theta}$ thereby complicating the search for $M_{\theta}(x)$.
Introducing \cref{eq:arccm_c3}, which is part of the novelty of this work, we bypass the difficulty of incorporating model estimation transients in \cref{eq:arccm_c1} while still achieving tracking error transient bounds; this will be shown in \cref{thm:main}. 
Condition \cref{eq:arccm_c3} can be equivalently stated (see \cref{prop:logE}) as a bound on the log gradient of the Riemannian energy when $x(t)\neq x_d(t)$.
This interpretation of condition \cref{eq:arccm_c3} provides a useful insight into how \cref{eq:arccm_c3} influences the metric and subsequently the Riemannian energy, namely it limits how fast the log of the Riemannian energy can change.

\begin{proposition}
\label{prop:logE}
When $x(t)\neq x_d(t)$ then condition \cref{eq:arccm_c3} holds for any curve $c(s,t)$  if and only if 
\begin{equation}
    |\nabla_{\theta_i} \log (E_\theta(c(s,t))| \leq \mu ,~~ i= 1,..., p \label{eq:arccm_c3b}
\end{equation}
where $E_\theta(c(s,t))$ is the Riemannian energy of curve $c(s,t)$.
\end{proposition}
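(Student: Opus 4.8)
The plan is to route both condition \cref{eq:arccm_c3} and the log-energy bound \cref{eq:arccm_c3b} through a single pointwise matrix inequality on the \emph{primal} metric $M_\theta$, and then establish that each is equivalent to it. The first step is to restate \cref{eq:arccm_c3}, which is written for the dual metric $W_\theta = M_\theta^{-1}$, as a condition on $M_\theta$ itself. Differentiating the identity $W_\theta M_\theta = I$ with respect to $\theta_i$ gives $\nabla_{\theta_i} W_\theta = -W_\theta (\nabla_{\theta_i} M_\theta) W_\theta$. Substituting this into \cref{eq:arccm_c3} and applying the congruence map $X \mapsto M_\theta X M_\theta$, which preserves the ordering $\preceq$ since $M_\theta \succ 0$, together with $M_\theta W_\theta = I$, collapses \cref{eq:arccm_c3} to the equivalent pointwise condition
\[
-\mu M_\theta(x) \preceq \nabla_{\theta_i} M_\theta(x) \preceq \mu M_\theta(x), \quad i=1,\dots,p,
\]
holding for all $x$. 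The two-sided bound is symmetric, so the sign flip introduced by the congruence is immaterial.

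The second step connects this inequality to the energy. Because $c(s,t)$ is a fixed curve and $M_\theta$ is smooth, differentiating $E_\theta(c) = \int_0^1 c_s^\top M_\theta(c)\, c_s\, ds$ under the integral sign yields $\nabla_{\theta_i} E_\theta(c) = \int_0^1 c_s^\top (\nabla_{\theta_i} M_\theta) c_s\, ds$, so that
\[
\nabla_{\theta_i} \log E_\theta(c) = \frac{\int_0^1 c_s^\top (\nabla_{\theta_i} M_\theta) c_s\, ds}{\int_0^1 c_s^\top M_\theta c_s\, ds}.
\]
The hypothesis $x(t) \neq x_d(t)$ guarantees a nontrivial curve, so the denominator $E_\theta(c) > 0$ (as $M_\theta \succ 0$) and the ratio is well defined.

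For the forward direction, the pointwise bound gives $c_s^\top (\nabla_{\theta_i} M_\theta) c_s \leq \mu\, c_s^\top M_\theta c_s$ (and the lower bound analogously) at every $s$; integrating and dividing by the positive denominator yields $|\nabla_{\theta_i} \log E_\theta(c)| \leq \mu$, which is \cref{eq:arccm_c3b}. For the converse, I would argue that if \cref{eq:arccm_c3b} holds for \emph{every} curve, the pointwise inequality must hold everywhere: were it to fail at some point $x_0$ in some direction $v$, say $v^\top (\nabla_{\theta_i} M_\theta(x_0)) v > \mu\, v^\top M_\theta(x_0) v$, then by continuity of $M_\theta$ the strict inequality would persist on a neighborhood, and a short straight-line curve $c(s) = x_0 + \epsilon s v$ localized there would make both integrals proportional to $\epsilon^2$, with the numerator exceeding $\mu$ times the denominator, contradicting \cref{eq:arccm_c3b}. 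Hence the two conditions are equivalent through the common pointwise inequality.

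I expect the main obstacle to be this converse (localization) direction, namely making rigorous that an integral bound valid over \emph{all} curves forces the pointwise matrix inequality. The cleanest route is the contradiction-plus-continuity argument above: since "any curve" includes arbitrarily short segments in arbitrary directions, the integral inequality reduces in the limit $\epsilon \to 0$ to $|v^\top (\nabla_{\theta_i} M_\theta) v| \leq \mu\, v^\top M_\theta v$ for all $v$, which is exactly the symmetric matrix inequality. The $W_\theta$-to-$M_\theta$ conversion in the first step is routine but must be carried out carefully so that the derivative identity and the congruence are applied in the correct order.
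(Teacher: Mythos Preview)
Your proposal is correct and follows essentially the same route as the paper: convert \cref{eq:arccm_c3} from $W_\theta$ to the equivalent two-sided bound on $M_\theta$, differentiate $E_\theta$ under the integral, and pass between the pointwise quadratic-form inequality and the integral (log-energy) inequality. Your localization argument for the converse direction is in fact more careful than the paper's, which simply asserts that the integral bound ``follows from the definition of $E_\theta$'' without spelling out why validity over all curves forces the pointwise matrix inequality.
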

\begin{proof}
Since $x(t) \neq x_d(t)$ then $E_\theta(c) > 0$.
Suppose $|\nabla_{\theta_i} \log (E_\theta(c))| \leq \mu$ (omitting the arguments of curve $c$).
Applying the chain rule and Liebnitz integral rule to the definition of the Riemannian energy, we obtain
\begin{equation*} 
\begin{aligned}
    & | \nabla_{\theta_i} \log (E_\theta(c)) | = |\nabla_{\theta_i} E_\theta(c)| \, / \, E_\theta(c) \leq \mu, \\[5pt]
    & \implies -\mu \, E_\theta(c) \leq \nabla_{\theta_i} E_\theta(c) \leq \mu \, E_\theta(c), \\[5pt]
    & \implies -\mu \, M_\theta(c) \preceq \nabla_{\theta_i} M(c) \preceq \mu \, M_\theta(c),
\end{aligned}
\end{equation*}
where the last implication follows from the definition of $E_\theta$.
Letting $W_\theta(x)=M_\theta(x)^{-1}$ then one arrives to \cref{eq:arccm_c3}.
Conversely, if \cref{eq:arccm_c3} holds then $|c_s^\top \nabla_{\theta_i} M(c) c_s| \leq \mu \,  c_s^\top M_\theta(c) c_s$ for any curve $c$.
Integrating with respect to $s$ and applying Cauchy-Schwarz inequality yields
\begin{equation*}
\begin{aligned}
    |\nabla_{\theta_i}E_\theta(c)| \leq \mu \, E_\theta(c) \implies |\nabla_{\theta_i} \log (E_\theta(c))| \leq \mu. \qedhere
\end{aligned}
\end{equation*}
\end{proof}

For numerically constructing an arccm, we see that \cref{eq:arccm_c1,eq:arccm_c2,eq:arccm_c3} are jointly convex in $W_\theta$, $Y_\theta$ and $\frac{1}{\alpha^2}$ so they can be efficiently computed using numerical methods.
However, they are non-convex in $\mu$ or $\lambda$ so a line search method is needed to optimize over these parameters. 
Picking an appropriate cost function that balances the competing objectives of convergence rate, overshoot, and control effort is also critical to obtain the desirable closed-loop response.

We now present the main result of the paper, which shows that the tractable conditions laid out in \cref{def:arccm} are sufficient to achieve definition \cref{def:main_exp}. 
The methodology of the proof is similar to the ones in \cite{manchester2018robust, zhao2022tube} with the key difference being how to handle $M_{\hat{\theta}}$ being a function of $\hat \theta$.
If $x(t)$ happens to be in the cut locus then the proof can be modified with the technique used in \cite{singh2017robust}.  
We make the following assumption to ensure the uniqueness of geodesics and smoothness of the Riemannian energy.

\begin{assumption}
For the control system \cref{eq:system} the set of times for which $x(t)$ is in the cut locus of $x_d(t)$ has zero measure. As shown in \cite{singh2017robust} this assumption is not strictly necessary and the upper Dini derivative can be used to remove the assumption. 
\end{assumption}

\begin{theorem}
\label{thm:main}
    Let $\hat{\theta}$ be the current estimate of the unknown model parameters $\theta$.
    If an adaptive robust ccm $M_{\theta}(x)$ exists, then the closed-loop system is eISC with respect to the estimation error $\tilde{\theta}$ with the control law
    \begin{equation}
    \label{eq:control}
        u(t) = u_d(t) + \int_0^1 K_{\hat{\theta}}(\gamma(s,t)) \, \gamma_s(s,t) \, ds,
    \end{equation}
    where $K_{\hat{\theta}}(x) \triangleq  Y_{\hat{\theta}}(x) M_{\hat{\theta}}(x)$ computed via the conditions in \cref{def:arccm} and $\gamma(s,t)$ is a geodesic connecting $x(t)$ and $x_d(t)$. Hence, if $\tilde{\theta}\rightarrow$ then $x(t)\rightarrow x_d(t)$ exponentially.
\end{theorem}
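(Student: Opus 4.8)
The plan is to use the Riemannian energy $E_{\hat\theta}(\gamma(s,t))=\int_0^1 \gamma_s^\top M_{\hat\theta}(\gamma)\gamma_s\,ds$ of the geodesic connecting $x_d(t)$ and $x(t)$ as an incremental Lyapunov function and to show it obeys a differential inequality of the form $\dot E_{\hat\theta}\le -2\lambda E_{\hat\theta}+(\text{a term in }\tilde\theta)$, up to a bounded multiplicative correction caused by the motion of $\hat\theta$. Because $M_{\hat\theta}$ is uniformly bounded (by \cref{def:arccm} together with compactness of $\Theta$), there exist $\underline m,\overline m>0$ with $\underline m|x-x_d|^2\le E_{\hat\theta}\le\overline m|x-x_d|^2$, so any bound on $E_{\hat\theta}$ transfers directly to the state error and yields the eISC estimate of \cref{def:main_exp}. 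The essential subtlety, and the one the proof is organized around, is that $E_{\hat\theta}$ depends on $t$ both through the geodesic $\gamma(\cdot,t)$ and explicitly through $\hat\theta(t)$; I would therefore split $\tfrac{d}{dt}E_{\hat\theta}$ into a ``frozen-parameter'' flow term and a ``parameter-variation'' term $\nabla_{\hat\theta}E_{\hat\theta}^\top\dot{\hat\theta}$.

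For the flow term I would follow the robust-CCM computation of \cite{manchester2018robust,zhao2022tube}. Using the first variation of the energy -- so that the implicit dependence of the minimizing geodesic on its endpoints enters only through the boundary velocities $\gamma_t(0,t)=\dot x_d$ and $\gamma_t(1,t)=\dot x$ -- together with integration by parts in $s$, the closed-loop differential dynamics, and the control law \cref{eq:control} with $K_{\hat\theta}=Y_{\hat\theta}M_{\hat\theta}$, condition \cref{eq:arccm_c2} removes the input dependence of the Jacobian ($B$ being a Killing vector field for $M_{\hat\theta}$) and condition \cref{eq:arccm_c1} supplies the contraction inequality. The unmatched uncertainty enters because $F_\theta=F_{\hat\theta}-\Delta^\top\tilde\theta$, i.e. as a disturbance $\Delta^\top\tilde\theta$ acting at the $s=1$ endpoint; completing the square against the $\tfrac{1}{\alpha^2}\Delta^\top\Delta$ term in \cref{eq:arccm_c1} (Young's inequality) absorbs it and leaves $\dot E_{\hat\theta}\big|_{\mathrm{flow}}\le -2\lambda E_{\hat\theta}+d\big(\sup_{\tau\in[t_0,t]}|\tilde\theta(\tau)|\big)$ for a class-$\mathcal K$ function $d$ whose gain is fixed by $\alpha$, which after the comparison step becomes the function $\sigma_2$ of \cref{def:main_exp}.

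For the parameter-variation term I would invoke \cref{eq:arccm_c3} through \cref{prop:logE}: since $x\ne x_d$ gives $E_{\hat\theta}>0$ and hence $|\nabla_{\hat\theta_i}\log E_{\hat\theta}|\le\mu$, it follows that $|\nabla_{\hat\theta}E_{\hat\theta}^\top\dot{\hat\theta}|\le \mu|\dot{\hat\theta}|\,E_{\hat\theta}$ (here again the envelope property of the minimizing geodesic lets me identify $\nabla_{\hat\theta}E_{\hat\theta}$ with the explicit partial $\int_0^1\gamma_s^\top\nabla_{\hat\theta}M_{\hat\theta}\gamma_s\,ds$). Combining the two contributions gives $\dot E_{\hat\theta}\le-(2\lambda-\mu|\dot{\hat\theta}|)E_{\hat\theta}+d(\sup_\tau|\tilde\theta(\tau)|)$. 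Applying the comparison lemma with integrating factor $\exp\!\big(\!\int_{t_0}^t(2\lambda-\mu|\dot{\hat\theta}|)\,d\tau\big)$ and using \cref{assumption:error} -- under which $\int_{t_0}^t|\dot{\hat\theta}|\,d\tau=\int_{t_0}^t|\dot{\tilde\theta}|\,d\tau$ is bounded by the initial estimation error $|\tilde\theta(t_0)|$ -- the accumulated parameter motion contributes only the finite constant factor $e^{\mu|\tilde\theta(t_0)|}$. This multiplies the transient term but leaves the decay rate $\lambda$ intact, so converting back through the metric bounds yields \cref{def:main_exp} with $\sigma_1(r)=e^{\mu|\tilde\theta(t_0)|}\sqrt{\overline m/\underline m}\,r$; in particular $\tilde\theta\to 0$ forces $\sigma_2\to 0$ and $x(t)\to x_d(t)$ exponentially.

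The main obstacle is precisely the parameter-variation term: it is the contribution that the classical route would have to dominate by demanding ISC with respect to $\dot{\hat\theta}$, and the entire purpose is to avoid that requirement. Getting it right rests on three ingredients acting together -- (i) the first-variation/envelope argument to identify $\nabla_{\hat\theta}E_{\hat\theta}$ correctly despite the geodesic's implicit dependence on $\hat\theta$, (ii) \cref{eq:arccm_c3} via \cref{prop:logE} to bound it by $\mu|\dot{\hat\theta}|E_{\hat\theta}$, and (iii) \cref{assumption:error} to guarantee $\int|\dot{\hat\theta}|$ is finite so the multiplicative correction never degrades the rate $\lambda$ -- and it is exactly this combination that allows \cref{eq:arccm_c3} to replace ISC-in-$\dot{\hat\theta}$. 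The remaining technical point, differentiability of $E_{\hat\theta}$ when $x(t)$ lies in the cut locus, I would handle as in \cite{singh2017robust} by passing to the upper Dini derivative, as permitted by the zero-measure cut-locus assumption stated above.
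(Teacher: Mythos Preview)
Your overall strategy matches the paper's: both take the Riemannian energy $E_{\hat\theta}$ as an incremental Lyapunov function, split $\dot E_{\hat\theta}$ into a frozen-$\hat\theta$ flow part and a parameter-variation part, handle the first with \cref{eq:arccm_c1,eq:arccm_c2} and the second with \cref{eq:arccm_c3} via \cref{prop:logE}, and arrive at an inequality of the form $\dot E_{\hat\theta}\le -\rho(t)E_{\hat\theta}+\alpha^2|\tilde\theta|^2$ with $\rho(t)=\lambda-p\mu|\dot{\hat\theta}|$. Two minor discrepancies: you drop the factor $p$ (condition \cref{eq:arccm_c3} bounds each $\nabla_{\theta_i}W_\theta$ separately, and summing over $i=1,\dots,p$ produces it), and your description of the disturbance as ``acting at the $s=1$ endpoint'' to be absorbed by the $\tfrac{1}{\alpha^2}\Delta^\top\Delta$ term of \cref{eq:arccm_c1} is slightly off --- the paper instead parameterizes $\theta(s)=s\hat\theta+(1-s)\theta$ so that $\delta_\theta=\tilde\theta$ enters the differential dynamics $\dot c_s$ along the entire curve, and the Schur complement of \cref{eq:arccm_c1} absorbs it pointwise in $s$, yielding $\alpha^2|\tilde\theta|^2$ after integration over $s$.

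The substantive divergence, and the place where your argument has a genuine gap, is the last step. The paper does \emph{not} integrate out the estimation transients: it sets $\bar\rho=\lambda-p\mu\|\dot{\hat\theta}\|_\infty$ and requires this to be kept strictly positive by design (stated explicitly right after the theorem: ``It is critical to ensure that $\rho(t)>0$\ldots''), obtaining eISC with the reduced rate $\bar\rho$. You instead retain the full rate and push the transients into a multiplicative overshoot $e^{\mu\int|\dot{\hat\theta}|}$, which would be a genuine improvement if $\int_{t_0}^\infty|\dot{\hat\theta}|\,d\tau$ were bounded by $|\tilde\theta(t_0)|$ as you claim. But \cref{assumption:error} does not give this: monotone non-increase of $|\tilde\theta|$ does \emph{not} bound the arc length $\int|\dot{\tilde\theta}|$ when $p>1$. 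Take $\tilde\theta(t)=(1-\epsilon t)(\cos\omega t,\sin\omega t)$ on $[0,1/\epsilon]$: the norm $|\tilde\theta|=1-\epsilon t$ is strictly decreasing, yet $\int_0^{1/\epsilon}|\dot{\tilde\theta}|\,d\tau\sim\omega/(2\epsilon)$ can be made arbitrarily large, so your factor $e^{\mu\int|\dot{\hat\theta}|}$ is unbounded and the eISC estimate fails. To close the argument you need either an additional bounded-total-variation hypothesis on $\hat\theta$, or you must revert to the paper's pointwise requirement $p\mu\|\dot{\hat\theta}\|_\infty<\lambda$.
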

\begin{proof}
    Let $\gamma(s,t)$ be a geodesic where $\gamma(0,t) = x_d(t)$ and $\gamma(1,t) = x(t)$.
    Moreover, let $c(s,t)$ be a curve that evolves according to \cref{eq:system} where $c(s,t') = \gamma(s,t')$ for some time $t'$.
    If the control law $u(s,t)=u_d(t)+\int_0^s{K_{\hat{\theta}}(c)c_s\, ds}$ where $\delta_u = u_s(s,t) = K_{\hat{\theta}}(c)c_s$ is applied for some interval $t\in [t', t'+\epsilon]$, then the differential dynamic of $c$ becomes
    \begin{equation*}
        \dot c_s = \left[{A}_{\hat{\theta}}(c,u) + B(c) K_{\hat{\theta}}(c) \right]c_s + \Delta(c)^\top \tilde{\theta},
    \end{equation*}
    where $\theta(s,t)=s\hat\theta(t)+(1-s)\theta(t)$ so $\delta_{\theta}=\theta_s=\tilde{\theta}(t)$.
    Letting $V_{\hat{\theta}}(c,c_s) = c_s^\top M_{\hat{\theta}}(c) c_s$ then
    \begin{equation*}
        \begin{aligned}
        \frac{d}{dt}(c_s^\top M_{\hat{\theta}} c_s) & = c_s^\top \Big( A^\top M_{\hat{\theta}} + M_{\hat{\theta}} A + \dot{M}_{\hat{\theta}} + \tilde{\theta}^\top \Delta M_{\hat{\theta}}  \\
        & \hphantom{=} + M_{\hat{\theta}} \Delta^\top \tilde{\theta}  + K_{\hat{\theta}}^\top g^\top M_{\hat{\theta}} + M_{\hat{\theta}} K_{\hat{\theta}} B \Big) c_s,
        \end{aligned}
    \end{equation*}
    where $\dot{M}_{\hat{\theta}}(c) = {\partial}_{{c}} M_{\hat{\theta}}(c) + {\partial}_{{\hat{\theta}}} M_{\hat{\theta}}(c)$.
    Making use of \cref{eq:arccm_c1} in \cref{def:arccm}, taking the Schur compliment and multiplying both sides by $[\eta_x, \delta_\theta]$ and $[\eta_x, \delta_\theta]^\top$ gives
 \begin{equation*}
    \begin{bmatrix} \eta_x\\ \delta_\theta \end{bmatrix}^\top \begin{bmatrix} \mathcal{W} & -\Delta \\ 
    -\Delta & \alpha^2 I
    \end{bmatrix} \begin{bmatrix} \eta_x  \\ \delta_\theta \end{bmatrix}  \geq 0,
\end{equation*}
    where $\mathcal{W}=\partial_c W_\theta - \lambda W_\theta - A_\theta W_\theta - W_\theta A_\theta^\top  - g Y_\theta - Y_\theta^\top g^\top$.
    Expanding the above expression, making the substitutions $\eta_x = M_{\hat{\theta}} c_s$ and $Y_{\hat{\theta}} =K_{\hat{\theta}} M_{\hat{\theta}}^{-1}$, and applying condition \cref{eq:arccm_c3} one can show that $\dot{V}_{\hat{\theta}}$ becomes
    \begin{equation*}
        \frac{d}{dt}(c_s^\top M_{\hat{\theta}} c_s)\leq -\rho(t) c_s^\top M_{\hat{\theta}} c_s + \alpha^2 |\tilde{\theta}|^2,
    \end{equation*}
    where $\rho(t) \triangleq \lambda - p\mu |\tfrac{d}{dt}{\hat \theta}|$. 
    Integrating along $c(s,t)$ leads to
    \begin{equation*}
        \int_{0}^{1}{\frac{d}{dt}(c_s^\top M_{\hat{\theta}} c_s)ds}\leq  -\rho(t)\int_{0}^{1}c_s^\top M_{\hat{\theta}} c_s ds+\alpha^2\int_{0}^{1}|\tilde{\theta}|^2 ds.
    \end{equation*}
    Switching the order of integration and differentiation yeilds
    \begin{equation*}
        \frac{d}{dt}E(c(s,t))\leq -\rho(t)E(c(s,t))+\alpha^2|\tilde \theta|^2.
    \end{equation*}
    For $t \in [t',t'+\epsilon]$ taking the limit as $\epsilon$ goes to $0$ gives
    \begin{align*}
        \lim_{\epsilon\rightarrow0} \frac{d}{dt}E(c(s,t)) &= \frac{d}{dt}\lim_{\epsilon\rightarrow0} E(c(s,t)) \\ 
        &\leq \lim_{\epsilon\rightarrow0} \left( -\rho(t)E(c(s,t))+\alpha^2|\tilde \theta|^2 \right).
    \end{align*}
    Since $c(s,t')=\gamma(s,t')$ and $t'$ was arbitrary we obtain
     \begin{equation}
        \frac{d}{dt} E(\gamma(s,t))\leq  -\rho(t)E(\gamma(s,t))+\alpha^2|\tilde \theta|^2.
        \label{eq:energy_eq_dot}
    \end{equation}
    Recalling $\tilde \theta$ and $\tfrac{d}{dt}{\hat \theta}$ are bounded by assumption, letting $\bar{\rho} \triangleq \lambda-p\mu ||\tfrac{d}{dt}{\hat \theta}||_\infty \leq \rho$ then \cref{eq:energy_eq_dot} becomes
    \begin{multline}
        E(\gamma(s,t))\leq E(\gamma(s,0))e^{-\bar{\rho} t} + \frac{\alpha^2}{\bar \rho }||\tilde\theta||^2_\infty(1-e^{- \bar \rho t}),
        \label{eq:energy_bound}
    \end{multline}
    which can be put into the same form as that in \cref{def:main_exp} by noting $M_{\hat{\theta}}(x)$ is uniformly bounded so there exists $\ubar{a},\,\bar{a} \in \mathbb{R}_{>0}$ such that $\ubar{a} I \preceq M_{\hat{\theta}}(x) \preceq \bar{a} I$ which leads to
    \begin{equation*}
        |x(t)-x_d(t)| \leq \sqrt{\tfrac{\bar{a}}{\ubar{a}}}|x(0)-x_d(0)|e^{-\bar{\rho} t} + \frac{\alpha^2}{\bar \rho }||\tilde\theta||^2_\infty(1-e^{- \bar \rho t}).
    \end{equation*}
    Therefore, the closed-loop system is eISC with respect to $\tilde{\theta}$ with the adaptive robust ccm $M_{\theta}(x)$ and control law \cref{eq:control}.
    Moreover, if $\tilde{\theta} \rightarrow 0$ then $x(t)\rightarrow x_d(t)$ exponentially. \qedhere
\end{proof}

It is important to note that although the system no longer has to be ISC / eISC with respect to the estimation transients, the convergence rate of the closed-loop system will change during estimation transients. 
It is critical to ensure that $\tfrac{d}{dt}\hat{\theta}$ never violates the condition $\rho(t) = \lambda - p\mu |\tfrac{d}{dt}{\hat{\theta}}| > 0$.
Pragmatically this can be achieved by selecting or modulating the parameter estimation rate.
Furthermore, there is an inherent trade-off between short-term tracking error convergence and long-term control performance from having a more accurate model estimate. 
In other words, the effects of fast model estimation must be carefully considered because large estimation transients will result in slower short-term convergence but better long-term control since the model is more accurately known.

\subsection{Offline / Online Computation}
\label{sub:imp}
As mentioned previously, the conditions in \cref{def:arccm} are jointly convex in $W_{\theta}(x),\, Y_{\theta}(x),\,\mathrm{and}\,\tfrac{1}{\alpha^2}$.
The values of these are determined offline using existing state-dependent LMI  solvers. 
The controller \cref{eq:control} needs to be computed online, and requires a geodesic at each time.
Geodesic computation does involve a nonlinear optimization but can be performed in real-time \cite{leung2017geodesic} and is less expensive than nonlinear MPC.
The parameter estimates $\hat{\theta}$ should be computed online by any suitable model estimator to maximize performance.

\label{sec:results}
\section{Illustrative Example}
Consider the system with state $x=[x_1,~ x_2,~ x_3]^\top$ unknown parameters $\theta=[\theta_1, ~ \theta_2, ~ \theta_3, ~ \theta_4]^\top$ and dynamics
\begin{equation}
    \begin{bmatrix}
        \dot x_1 \\
        \dot x_2 \\
        \dot x_3
    \end{bmatrix} = \begin{bmatrix}x_3-\theta_1 x_1 \\ -x_2-\theta_2 x_1^2 \\ \tanh(x_2)- \theta_3 x_3 - \theta_4 x_1^2\end{bmatrix}  + \begin{bmatrix}
        0 \\ 0 \\ 1
    \end{bmatrix}u,
    \label{example_sys}
\end{equation}
where $\theta_1 \in [-1, ~ 1]$, $\theta_2 \in [0.5, ~ 1.5]$,  $\theta_3 \in [-0.6, ~ 0.75]$ and $\theta_4 \in [-1.75, ~ 0.5]$. 
The true model parameters are $\theta^*=[-0.3, ~ 0.8, ~ -0.25, ~ -0.75]^\top$.
This system has been used before in \cite{lopez2021universal} (modified from that in \cite{manchester2017control}) because it demonstrates the versatility of contraction since the systems is not feedback linearizable, is not in strict feedback form, and has unmatched uncertainties.
A desired trajectory $x_d$ was generated by specifying $x_{1d}(t)=\sin(t)$; the remaining states $x_{2d}(t)$ and $x_{3d}(t)$ were generated based on an updating model of the system.
An arccm was computed via sum-of-squares programming with YALMIP \cite{lofberg2004yalmip} where $W_\theta$ and $Y_\theta$ were parameterized by fourth-order matrix polynomials in state $x$ and parameter $\theta$.
In order to ensure the conditions in \cref{def:arccm} were satisfied for all $\theta \in \Theta$, \cref{eq:arccm_c1,eq:arccm_c3} were imposed at each point in a uniform 21 x 21 grid for the unmatched parameters $\theta_1$ and $\theta_2$. 
The control input was computed as outline in \cref{sub:imp}.

\begin{figure}[t!]
    \centering
    \includegraphics[width=.45\textwidth]{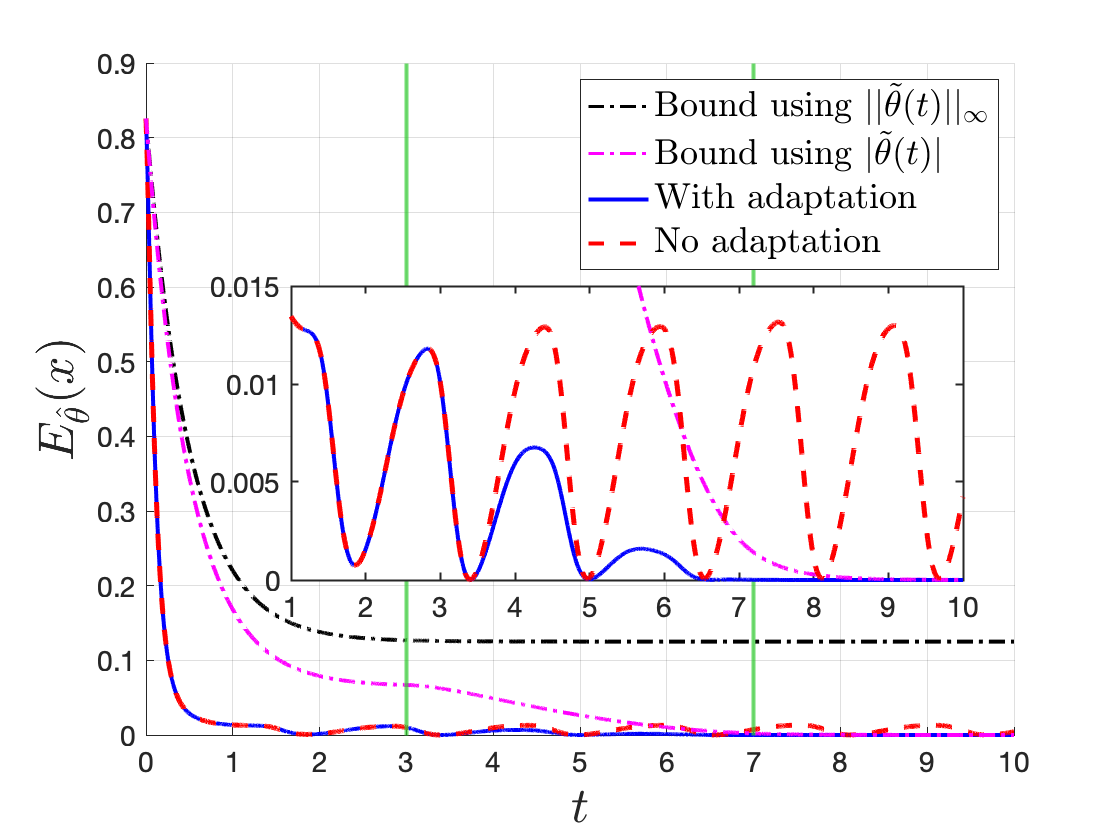}
    \caption{ Riemannian energy of the new method is lower than that of a controller with a static reference model. The model with adaptation begins adapting at $t=3$ s and finishes at $t=7$ s.}
    \label{fig:energy}
    \vskip -0.2in
\end{figure}

\begin{figure}[t!]
    \begin{subfigure}{.49\columnwidth}
         \centering
         \includegraphics[trim={30 10 100 0},clip, width=1\textwidth]{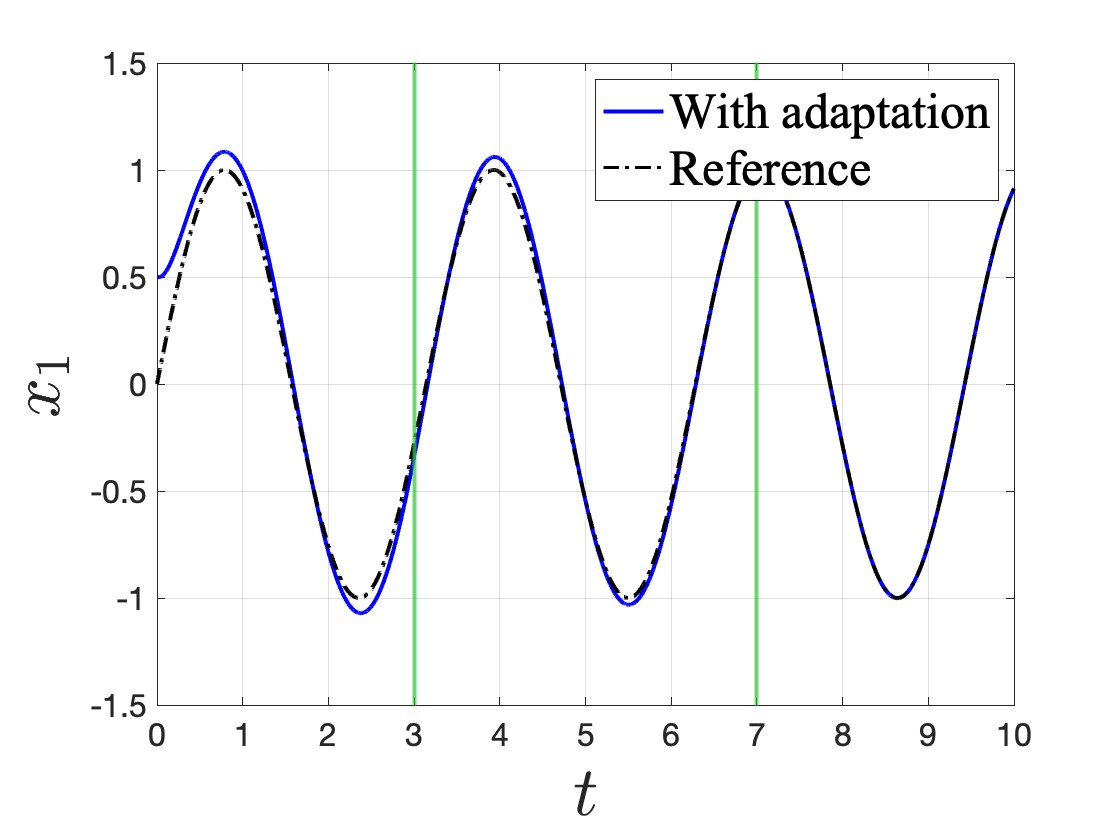}
         \caption{$x_1$ with adaptation}
         \label{fig:ya_x1}
     \end{subfigure}
     \hfill{}
     \begin{subfigure}{.49\columnwidth}
         \centering
         \includegraphics[trim={30 10 100 0},clip,width=1\textwidth]{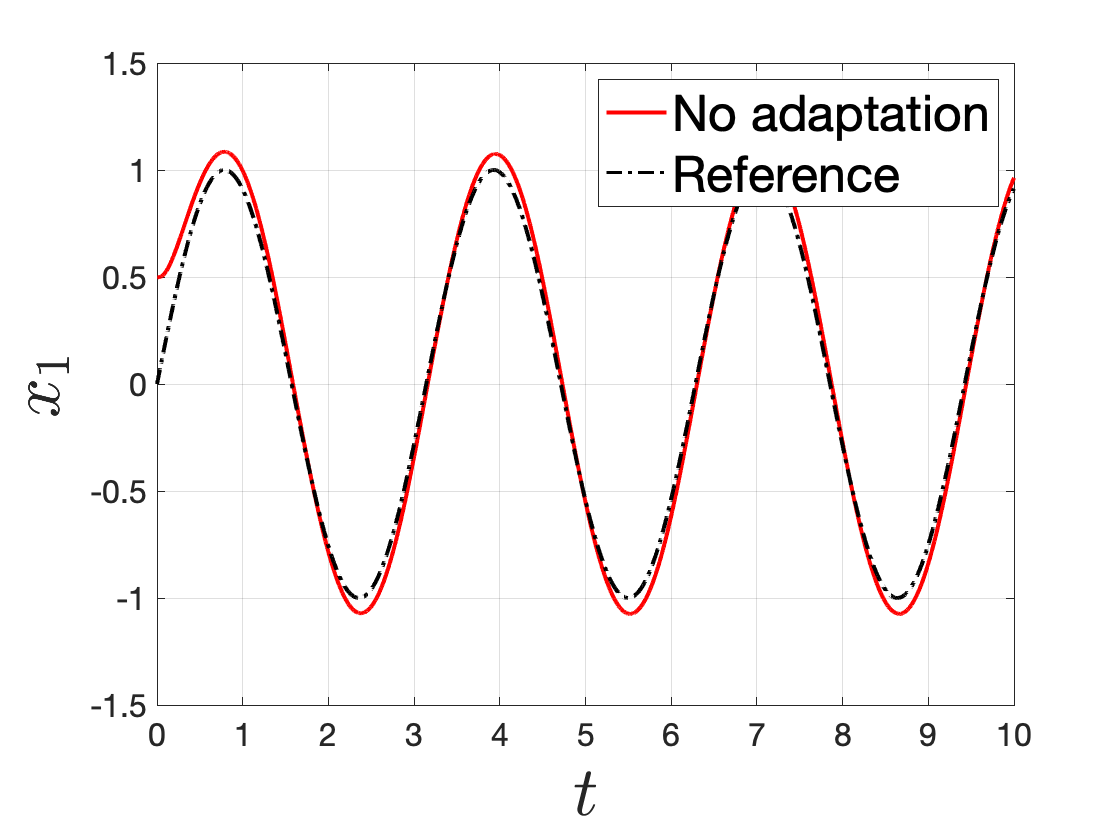}
         \caption{$x_1$ without adaptation}
         \label{fig:na_x1}
     \end{subfigure}
     \vfill{}
     \begin{subfigure}{.49\columnwidth}
         \centering
         \includegraphics[trim={30 10 100 0},clip,width=1\textwidth]{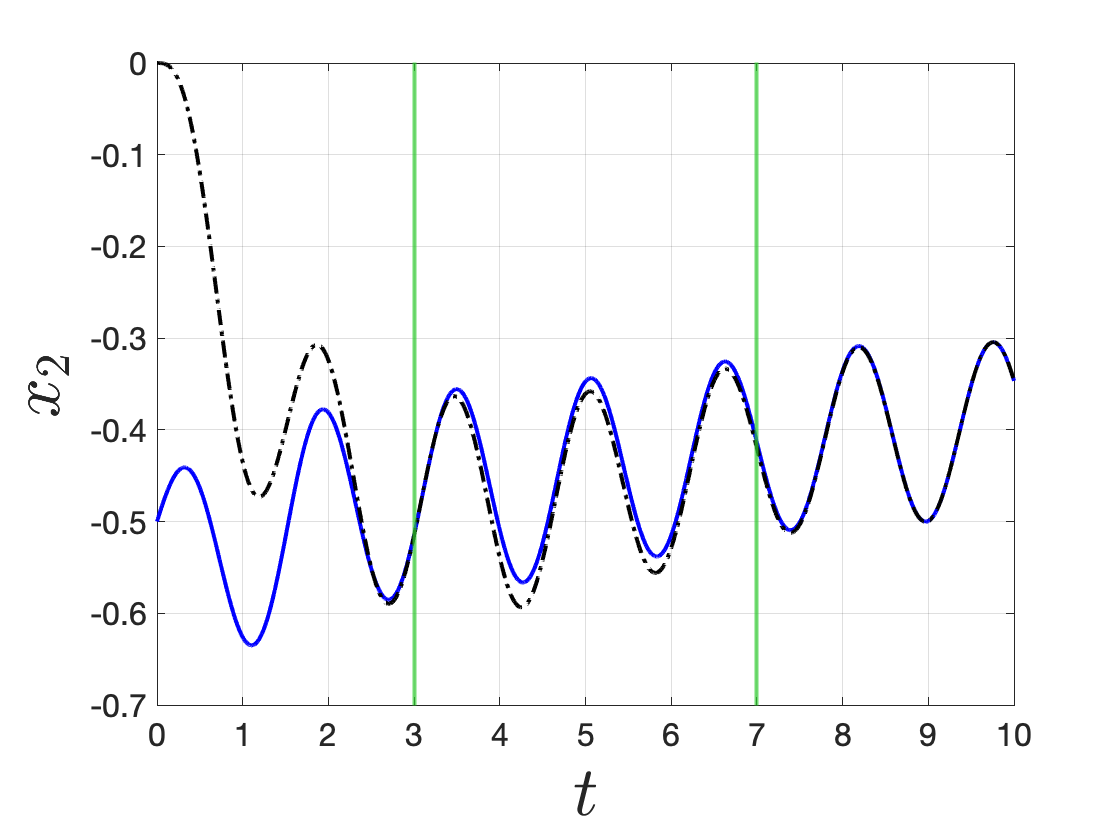}
         \caption{$x_2$ with adaptation}
         \label{fig:ya_x2}
     \end{subfigure}
     \hfill{}
     \begin{subfigure}{.49\columnwidth}
         \centering
         \includegraphics[trim={30 10 100 0},clip,width=1\textwidth]{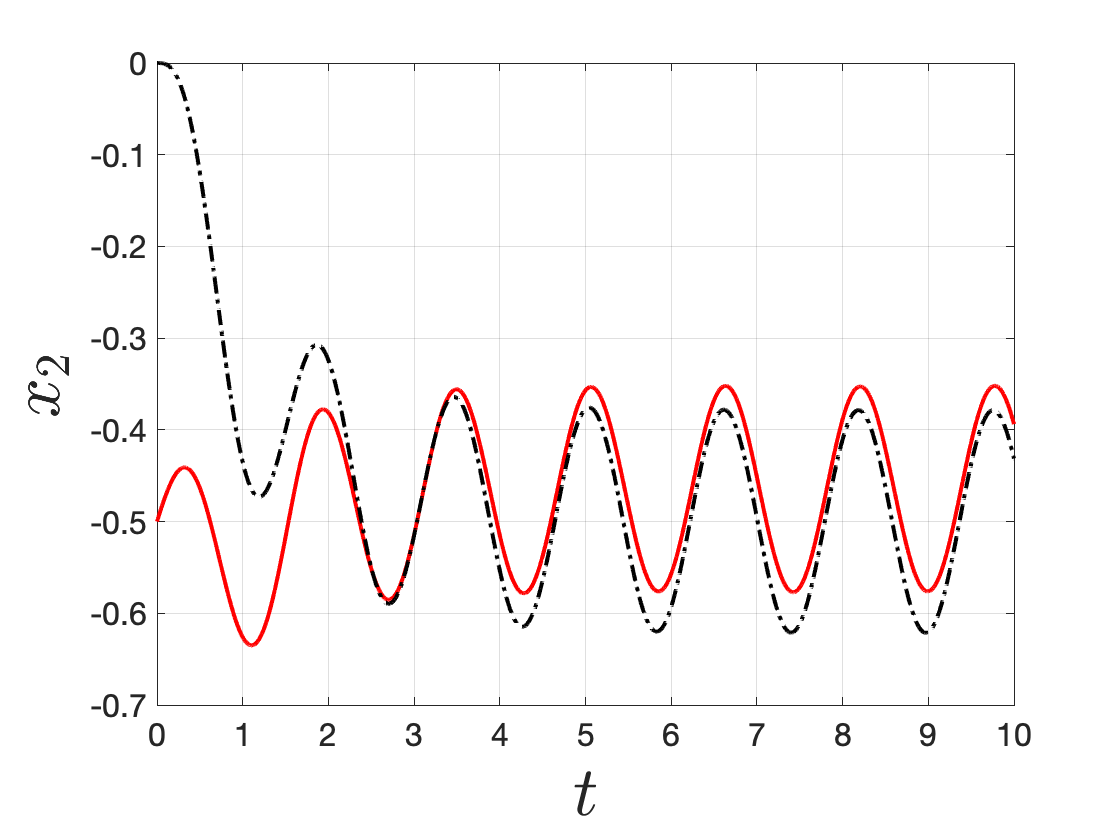}
         \caption{$x_2$ without adaptation}
         \label{fig:na_x2}
     \end{subfigure}
     \vfill{}
     \caption{States with and without model updates. The reference for $x_1$ is the same for both controllers but differs for other states due to model updates. The model updates start at $t=3$ s and converges at $t=7$ s (vertical green lines). State $x_3$ omitted for brevity as it exhibits identical behavior.}
     \vskip -0.3in
     \label{fig:states}
\end{figure}

Two controllers were tested to verify the properties of the proposed approach: one with model estimation and the other without. 
The adaptive controller was initialized with an estimate $\hat \theta$ set to the midpoint of $\Theta$.
At $t=3$ s the parameters began to linearly change to the true parameters with convergence occurring at $t=7$ s, i.e., $\tilde{\theta}=0$; this is indicated by green vertical lines in \cref{fig:energy,,fig:states}. 
The non-adaptive controller used the same metric but $\hat \theta$ was fixed to the midpoint of $\Theta$. 
 The Riemannian energy --- an indicator of the closed-loop tracking error --- was calculated for the two controllers and is plotted in \cref{fig:energy}.
 As seen in the figure both controllers obey \cref{eq:energy_bound}, as the energy is always upper bounded by the predicted conservative bound (black line). 
 The black line is the bound based on the most aggressive adaption and worst case $\tilde \theta$, which can be used to construct a worst-case error bound from the start and therefore be employed for predictive control. 
 \cref{fig:energy} also shows the bound calculated by integrating  \cref{eq:energy_eq_dot} (magenta line).
 In practice, the bound from \cref{eq:energy_eq_dot} would not be known as it relies on knowing $\tilde \theta(t)$ (and hence knowing $\theta$ at all time) but is useful here to confirm the theoretical result that $x \rightarrow x_d$ as $\tilde{\theta} \rightarrow 0$.
 As shown in the inset of \cref{fig:energy}, the tracking error with the robust (non-adaptive) controller satisfies the conservative bound but does not converge to zero. 
 Although the magenta bound is not an achievable worst case bound, the worst case bound could be improved if $\tilde \Theta$ is reduced via, e.g., set membership identification, to achieve zero tracking error. 
 This is in contrast to only improving $\tilde \theta$ which yields better performance but the conservative bound is unchanged.
 \cref{fig:states} shows the time history of $x_1$ and $x_2$ for reference.

\section{Conclusion}
\label{sec:conclusion}
We presented a contraction-based framework for robust trajectory tracking despite the presence of uncertain dynamics that are being estimated online.
After introducing the adaptive robust control contraction metric (arccm), which is allowed to depend on the unknown parameters, a feedback control law was proposed and shown to yield an exponentially input-to-state convergent closed-loop system. 
By allowing the metric to depend on the unknown parameters we are able to guarantee the trajectory tracking error will converge to zero if the parameter estimates converge to their true value.
The key mechanism for allowing the metric to be parameter dependent is a novel gradient condition that only changes the convergence rate, as opposed to sacrificing stability or transient bounds.
Future work includes using arccm's in adaptive MPC, as well as developing new methods for computing contraction metrics in high-dimensions.
\section{Appendix}
\label{sec:appendix}
A Lyapunov generalization of the result presented in \cref{sec:proofs} are derived here.
Note that generalizing comes at the expense of losing convexity of the constructive conditions.

\begin{definition}
    \label{def:aisc-clf}
    A continuously differentiable function $V_{\theta} : \mathbb{R}^n \times \mathbb{R}^n \times \mathbb{R}^p \rightarrow \mathbb{R}_+$ where $V_{\theta}(x,x_d) = 0 \iff x(t) = x_d(t)$ is an \emph{adaptive} eISC-clf if the following hold
    \begin{gather}
        k_1 |x-x_d|^a \leq V_{\theta}(x,x_d) \leq k_2 |x-x_d|^a \tag{2} \\
        |x(t) - x_d(t)| \geq \sigma \left( \sup_{\tau \in [t_0,t]} | {\tilde{\theta}(\tau)} |\right) \implies  \nonumber \\
        \underset{u \in \mathbb{R}^m}{\mathrm{inf}} \Big\{ \nabla_x V_{\hat{\theta}}^\top F_{\hat{\theta}}(x,u) + \nabla_{x_d} V_{\hat{\theta}}^\top F_{\hat{\theta}}(x_d,u_d) - \nabla_x V_{\hat{\theta}}^\top \Delta^\top \tilde{\theta} \Big\} \nonumber \\
        \leq -k_3 |x-x_d|^a \label{eq:aisc-clf-1} \\
         -\mu V_\theta \leq \nabla_{\theta_i}V_\theta \leq \mu V_\theta, ~~ i=1,...,p \label{eq:aisc-clf-2}
    \end{gather}
    where $k_1,\,k_2,\,k_3,\,a,\,\mu\in \mathbb{R}_{>0}$, $\sigma:\mathbb{R}\rightarrow\mathbb{R}$ is a class $\mathcal{K}$ function, and $\hat{\theta} \in \Theta$ and $\tfrac{d}{dt}{\hat{\theta}} \in \Omega$ belong to compact sets.
    \label{def:lyap}
\end{definition}
\begin{theorem}
Let $\hat{\theta}$ be the current estimate of the model parameters $\theta$.
If an adaptive eISC-clf $\,V_{\hat{\theta}}(x,x_d)\,$ exists, then the closed-loop system is eISC with respect to the estimation error $\tilde{\theta}$. 
Hence, if $\tilde{\theta}\rightarrow 0$ then $x(t) \rightarrow x_d(t)$ exponentially.
\label{th:lyap}
\end{theorem}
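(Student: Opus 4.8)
The plan is to mirror the proof of \cref{thm:main}, with the scalar function $V_{\hat{\theta}}(x,x_d)$ now playing the role that the Riemannian energy $E$ played there. First I would differentiate $V_{\hat{\theta}}$ along the closed-loop trajectory, taking care to account for the explicit dependence of the function on the time-varying estimate $\hat{\theta}$:
\[ \dot{V}_{\hat{\theta}} = \nabla_x V_{\hat{\theta}}^\top \dot{x} + \nabla_{x_d} V_{\hat{\theta}}^\top \dot{x}_d + \nabla_{\hat{\theta}} V_{\hat{\theta}}^\top \dot{\hat{\theta}}. \]
Writing the true dynamics as $\dot{x} = F_{\hat{\theta}}(x,u) - \Delta^\top \tilde{\theta}$ (since $\theta - \hat{\theta} = -\tilde{\theta}$) together with $\dot{x}_d = F_{\hat{\theta}}(x_d,u_d)$, the first two terms and the $\tilde{\theta}$ correction reassemble \emph{exactly} into the bracketed expression inside the infimum of \cref{eq:aisc-clf-1}. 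This is the payoff of stating the clf condition in that particular grouped form.

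Next I would invoke the adaptive eISC-clf property. On the region where $|x-x_d| \geq \sigma(\sup_{\tau}|\tilde{\theta}(\tau)|)$, choosing $u$ to attain the infimum in \cref{eq:aisc-clf-1} bounds the assembled terms by $-k_3|x-x_d|^a$, and the upper sandwich bound of \cref{def:aisc-clf} converts this into $-(k_3/k_2)V_{\hat{\theta}}$. The leftover adaptation term $\nabla_{\hat{\theta}}V_{\hat{\theta}}^\top \dot{\hat{\theta}}$ is precisely the obstruction introduced by letting the clf depend on the estimate, and it is controlled by the novel gradient condition \cref{eq:aisc-clf-2}: since $|\nabla_{\theta_i}V_{\theta}| \leq \mu V_{\theta}$ for each $i$, one obtains $|\nabla_{\hat{\theta}}V_{\hat{\theta}}^\top \dot{\hat{\theta}}| \leq p\mu|\dot{\hat{\theta}}|\,V_{\hat{\theta}}$. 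Combining these gives the differential inequality $\dot{V}_{\hat{\theta}} \leq -\rho(t)V_{\hat{\theta}}$ on this region, with $\rho(t) = k_3/k_2 - p\mu|\dot{\hat{\theta}}|$, directly paralleling the rate that appears in \cref{thm:main}.

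Finally, I would close with the standard ISS-Lyapunov comparison argument. Setting $\bar{\rho} = k_3/k_2 - p\mu\|\dot{\hat{\theta}}\|_\infty$, which the assumed bound on the estimation rate keeps positive, the comparison lemma applied to $\dot{V}_{\hat{\theta}} \leq -\bar{\rho}V_{\hat{\theta}}$ yields exponential decay of $V_{\hat{\theta}}$ while the state stays outside the gain-determined region; once $|x-x_d| < \sigma(\sup_{\tau}|\tilde{\theta}(\tau)|)$, this inequality itself supplies the additive class-$\mathcal{K}$ residual. Re-expressing $V_{\hat{\theta}}$ in terms of $|x-x_d|$ via the two-sided bounds of \cref{def:aisc-clf} then produces the estimate of \cref{def:main_exp}, with $\sigma_1(r) = (k_2/k_1)^{1/a}r$ and convergence rate $\bar{\rho}/a$; in particular, as $\tilde{\theta}\to 0$ the residual vanishes and $x\to x_d$ exponentially.

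The main obstacle, and the entire content of the result, is the adaptation term $\nabla_{\hat{\theta}}V_{\hat{\theta}}^\top \dot{\hat{\theta}}$. Without \cref{eq:aisc-clf-2} this term is uncontrolled and would preclude any dissipation inequality whatsoever; the theorem asserts that the simple gradient bound demotes it from a stability-destroying perturbation to a mere reduction of the effective convergence rate. The only structural price is that $\rho(t)>0$ must hold throughout, i.e. $\|\dot{\hat{\theta}}\|_\infty < k_3/(k_2 p\mu)$, exactly the restriction on $\dot{\hat{\theta}}$ already noted after \cref{thm:main}.
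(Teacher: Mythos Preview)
Your proposal is correct and follows essentially the same route as the paper: differentiate $V_{\hat{\theta}}$ along trajectories, use \cref{eq:aisc-clf-1} to bound the non-adaptation terms by $-(k_3/k_2)V_{\hat{\theta}}$ on the region $|x-x_d|\geq\sigma(\sup|\tilde{\theta}|)$, absorb the adaptation term via \cref{eq:aisc-clf-2} into the rate $\rho(t)=k_3/k_2-p\mu|\dot{\hat{\theta}}|$, and conclude with the standard ISS comparison argument. Your write-up is in fact slightly more explicit than the paper's (you spell out the chain rule, the form of $\sigma_1$, and the exponent $\bar{\rho}/a$), but the argument is the same.
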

\begin{proof}
Let $V_{\hat{\theta}}$ be an aISC-clf.
Consider when $|x-x_d| \geq \sigma ( \sup_{\tau \in [t_0,t]} |{\tilde{\theta}(\tau)}| )$.
Differentiating along \cref{eq:system} yields
\begin{equation*}
    \dot{V}_{\hat \theta} \leq -k_3|x-x_d|^a + \nabla_{\hat \theta} V_{\hat \theta}^\top \dot{{\hat \theta}}  \leq -\frac{k_3}{k_2}V_{\hat{\theta}} + p\mu V_{\hat \theta} \bigl|\dot{\hat \theta}\bigr|
\end{equation*}
where we make use of \cref{eq:isc-clf-pd,eq:aisc-clf-1,eq:aisc-clf-2} to obtain the second inequality.
If $\tfrac{d}{dt}\hat{\theta}$ is designed so $\rho(t) = \tfrac{k_3}{k_2} - p\mu |\tfrac{d}{dt}{\hat{\theta}}| > 0$ and since $\tilde \theta$ and $\tfrac{d}{dt}{\hat \theta}$ are bounded then letting $\bar{\rho} \triangleq \tfrac{k_3}{k_2}-p\mu ||\tfrac{d}{dt}{\hat \theta}||_\infty \leq \rho(t)$ yields $\dot{V}_{\hat{\theta}} \leq -\bar\rho \, V_{\hat \theta} \implies |x(t)-x_d(t)|  \leq \sigma_1(|x(0)-x_d(0)|) e^{-\lambda t}$ via \cref{eq:isc-clf-pd}.

Since this holds only when the implication \cref{eq:aisc-clf-1} is true, we can conclude $|x(t)-x_d(t)|  \leq \sigma_1(|x(0)-x_d(0)|) e^{-\lambda t} + \sigma_1(\sup_{\tau \in [t_0,t]} |{\tilde{\theta}(\tau)}|)$.
Therefore, the closed-loop system is eISC with respect to $\tilde{\theta}$.
Moreover, if $\tilde{\theta} \rightarrow 0$ then $x(t)\rightarrow x_d(t)$ exponentially. \qedhere
\end{proof}
We see that imposing a gradient bound on $V_{\hat{\theta}}(x,x_d)$ in \cref{eq:aisc-clf-2} similar to \cref{eq:arccm_c3} again bypasses requiring ISC with respect to the parameter estimation transients.
When $x(t) \neq x_d(t)$ then \cref{eq:aisc-clf-2} can also be interpreted as a bound on the log gradient of $V_{\hat{\theta}}(x,x_d)$.
Despite the similarity, it is well-known that computing clf's for general nonlinear system presents several difficulties that contraction theory circumvents \cite{manchester2017control}.

\FloatBarrier
\bibliographystyle{IEEEtran}
\bibliography{references}

\end{document}